\newcommand{\ABox}{
\raisebox{3pt}{\framebox[6pt]{\rule{6pt}{0pt}}}
}
\newenvironment{proof}{{\bf Proof:}}{\hfill\ABox}
\newtheorem{theorem}{{\bf Theorem}}
\newtheorem{corollary}[theorem]{Corollary}
\newtheorem{lemma}[theorem]{Lemma}
\newcommand{\e}{\varepsilon}
\newcommand{\id}{{\tt ID}}
\newcommand{\yy}{YY}
\newcommand{\yys}{{\sc YY-Spanner}}
\newcommand{\YY}{{Yao-Yao}}
\newcommand{\ys}{{\sc Yao-S{tep}}}
\newcommand{\rys}{{\sc Reverse Yao-S{tep}}}
\newcommand{\yss}{{\sc Sink-S{tep}}}
\newcommand{\yssa}{{\sc Yao-Sparse-Sink}}
\newcommand{\Y}{{Yao}}
 \gdef\xxxmark{%
   \expandafter\ifx\csname @mpargs\endcsname\relax 
     \expandafter\ifx\csname @captype\endcsname\relax 
       \marginpar{xxx}
     \else
       xxx 
     \fi
   \else
     xxx 
   \fi}
 \gdef\xxx{\@ifnextchar[\xxx@lab\xxx@nolab}
 \long\gdef\xxx@lab[#1]#2{{\bf [\xxxmark #2 ---{\sc #1}]}}
 \long\gdef\xxx@nolab#1{{\bf [\xxxmark #1]}}
 \gdef\turnoffxxx{\long\gdef\xxx@lab[##1]##2{}\long\gdef\xxx@nolab##1{}}%
\title{\vspace{-0.6in}A Simple Yao-Yao-Based Spanner of Bounded Degree
\thanks{This work has been supported by NSF grant CCF-0728909.}}
\author{Mirela Damian\thanks{
Department of Computer Science, Villanova University, Villanova, PA
19085. E-mail: {\tt  mirela.damian@villanova.edu}.}}
\date{}
\begin{document}

\maketitle

\begin{abstract}
\small It is a standing open question to decide whether the \YY\
structure for unit disk graphs (UDGs) is a length spanner of not.
This question is highly relevant to the topology control problem for
wireless ad hoc networks. In this paper we make progress towards
resolving this question by showing that the \YY\ structure is a
length spanner for UDGs of bounded aspect ratio. We also propose a
new local algorithm, called Yao-Sparse-Sink, based on the Yao-Sink
method introduced by Li, Wan, Wang and Frieder, that computes a
$(1+\e)$-spanner of bounded degree for a given UDG and for given $\e
> 0$. The Yao-Sparse-Sink method enables an efficient local
computation of sparse sink trees. Finally, we show that all these
structures for UDGs -- Yao, Yao-Yao, Yao-Sink and Yao-Sparse-Sink --
have arbitrarily large weight.
\end{abstract}

\section{Introduction}
Let $G = (V,E)$ be a connected graph with $n$ vertices embedded in
the Euclidean plane. For any pair of vertices $u, v \in V$, an
$uv$-\emph{path} is defined by a sequence of edges $uu_1, u_1u_2,
\ldots, u_sv$. A subgraph $H$ of $G$ is a \emph{length spanner} of
$G$ if, for all pairs of vertices $u, v \in V$, the length of a
shortest $uv$-path in $H$ is no longer than a constant times the
length of a shortest $uv$-path in $G$; if the constant value is $t$,
$H$ is called a length $t$-spanner and $t$ is called \emph{length
stretch factor}.

The \emph{power} needed to support a wireless link $uv$ is
$|uv|^\beta$, where $\beta$ is a path loss gradient
(a real constant between $2$ and $5$) that depends on the
transmission environment. A subgraph $H$ of a graph $G$ has
\emph{power stretch factor} equal to $\rho$ if, for
all pairs of vertices $u, v$ in $G$, the power of a minimum power
$uv$-path in $H$ is no higher than $\rho$ times the power of a
minimum power $uv$-path in $G$. Li et al.~\cite{li02sparse} showed
that a graph with length stretch factor $\delta$ has power stretch
factor $\delta^\beta$, but the reverse is not necessarily true:


\medskip
\noindent {\bf Fact 1~\cite{li02sparse}.} Any subgraph $H \subseteq
G$ with length stretch factor $\delta$ has
power stretch factor $\delta^\beta$. 

\medskip
\noindent The problem of constructing a sparse spanner of a given
graph has received considerable attention from researchers in
computational geometry and ad-hoc wireless networks; we refer the
reader to the recent book by Narasimhan and Smid~\cite{ns-gsn-07}.
The simplest model of a wireless network graph is the Unit Disk
Graph (UDG): an edge exists in the graph if and only if the
Euclidean distance between its endpoints is no greater than $1$.

It is a standing open question to decide whether the \YY\ structure
for UDGs introduced by Li et al.~\cite{li02sparse} is a spanner of
not. The \YY\ graph (also known as \emph{\Y\ plus reverse \Y}) is
based on the \Y\ graph~\cite{Yao82}, from which a number of edges
are eliminated through a reverse \Y\ process, to ensure bounded
degree at each node.
Progress towards resolving this question has been made by Wang and
Li~\cite{WangLi03}, who showed that the Yao-Yao graph has constant
\emph{power stretch factor} in a \emph{civilized} UDG. For constant
$\lambda
> 0$, a $\lambda$-\emph{civilized} graph is a graph in which no two
nodes are at distance smaller than $\lambda$. Most often wireless
devices in a wireless network can not be too close, so it is
reasonable to model a wireless ad hoc network as a civilized UDG.

In this paper we show that the \YY\ graph for a \emph{civilized} UDG
has constant \emph{length stretch factor} as well. Although several
papers refer to a similar result as appearing in~\cite{li02sparse},
to the best of our knowledge there is no version
of~\cite{li02sparse} that publishes this result. We also analyze the
bounded degree spanner generated by the \emph{Yao-Sink} technique
introduced in~\cite{LiWanWang}. The sink technique replaces each
directed star in the Yao graph consisting of all links directed into
a node $u$, by a tree $T(u)$ with sink $u$ of bounded degree.
We propose an enhanced technique called \emph{Yao-Sparse-Sink} that
filters out some of the edges in the Yao graph prior to applying the
sink technique. This enables an efficient local computation of
sparse sink trees, more appropriate for highly dynamic wireless
network nodes. Our analysis of the Yao-Sparse-Sink method provides
additional insight into the properties of the Yao-Yao structure.
We also show that all these structures for UDGs -- Yao, Yao-Yao,
Yao-Sink and Yao-Sparse-Sink --
have arbitrarily large weight.

The rest of the paper is organized as follows. In
Section~\ref{sec:preliminaries} we introduce some notation and
definitions and discuss previous related work. In
Section~\ref{sec:yys.civilized} we show that the Yao-Yao graph is a
spanner for UDGs of bounded aspect ratio (in particular, for
$\lambda$-civilized UDGs). In Section~\ref{sec:alg} we discuss the
Yao-Sink method and, based on this, we propose a new technique
called Yao-Sparse-Sink that computes sparse sink trees efficiently.
Finally, in Section~\ref{sec:total.weight} we show that all these
structures for UDGs have unbounded weight.

\section{Preliminaries}
\label{sec:preliminaries}
%

\subsection{Definitions and Notation}
\label{sec:notation}
Throughout the paper we use the following notation:
$uv$ denotes the edge with endpoints $u$ and $v$;
$\overrightarrow{uv}$ denotes the edge directed from \emph{source}
node $u$ to \emph{sink} node $v$; $|uv|$ denotes the Euclidean
distance between $u$ and $v$; $p(u \rightsquigarrow v)$ denotes a
simple $uv$-path; and $\oplus$ denotes the concatenation operator.
For any nodes $u$ and $v$, let $K_u$ denote an arbitrary cone with
apex $u$, and $K_u(v)$ denote the cone with apex $u$ containing $v$.
For any edge set $E$ and any cone $K_u$, let $E \cap K_u$ denote the
subset of edges in $E$ incident to $u$ that lie in $K_u$. Similarly,
for a graph $G$ and a cone $K_u$, $G \cap K_u$ is the subset of
edges in $G$ incident to $u$ that lie in $K_u$.
%
The \emph{aspect ratio} of an edge set $E$ is the ratio of the
length of a longest edge in $E$ to the length of a shortest edge in
$E$. The aspect ratio of a graph is defined as the aspect ratio of
its edge set.

We assume that each node $u$ has a unique identifier \id($u$).
Define the identifier $\id(\overrightarrow{uv})$ of a directed edge
$\overrightarrow{uv}$ to be the triplet $(|uv|, \id(u), \id(v))$.
For any pair of directed edges $\overrightarrow{uv}$ and
$\overrightarrow{u'v'}$, we say that $\id(\overrightarrow{uv}) <
\id(\overrightarrow{u'v'})$ if and only if one of the following
conditions holds:
\begin{itemize}
\item [(a)] $|uv| < |u'v'|$
\item [(b)] $|uv| = |u'v'|$ and $\id(u) < \id(u')$
\item [(b)] $|uv| = |u'v'|$ and $\id(u) = \id(u')$ and $\id(v) < \id(v')$
\end{itemize}
For an undirected edge $uv$, define $\id(uv) =
\min\{\id(\overrightarrow{uv}), \id(\overrightarrow{vu})\}$. Note
that according to this definition, each edge has a unique
identifier. This enables us to order any edge set by increasing \id\
of edges.

\subsection{Previous Work}
\label{sec:previous.work}
%
%
Yao~\cite{Yao82} defined the Yao graph $Y_k(G)$ as follows. At each node
$u \in V$, any $k$ equal-separated rays originated at $u$ define $k$ cones;
in each cone, pick the edge $uv$ of smallest \id,
if such an edge exists, and add to the Yao graph the directed edge
$\overrightarrow{uv}$. We call this the \ys, described in
Table~\ref{tab:y}.

\begin{table}[hptb]
\begin{center}
\vspace{1mm} \fbox{
\begin{minipage}[h]{0.8\linewidth}
\centerline{\ys($G=(V,E), k$)}
\vspace{1mm}{\hrule width\linewidth}\vspace{2mm} 
\small{\begin{tabbing}
.....\=......\=.......\=.......\=..................................................\kill
Set $E_Y \leftarrow \phi$ and $Y_k \leftarrow (V,E_Y)$.\\
For each node $u$ \\
   \> Partition the space into $k$ equal-size cones with apex $u$ of angle $\theta = 2\pi/k$\\
   \> (assume that each cone is half-open and half-closed). \\
\\
For each node $u$ and each cone $K_u$ such that $E \cap K_u$ is nonempty \\
 \>  Pick the edge $uv \in E \cap K_u$ with lowest \id($uv$). \\
 \>  Add the directed edge $\overrightarrow{uv}$ to $E_Y$.\\
\\
{\bf Output $Y_k = (V, E_Y)$.} 
\end{tabbing}}
\end{minipage}
}
\vspace{1mm}
\end{center}
\vspace{-1em}\caption{The Yao step.}
\label{tab:y}
\end{table}
It has been shown that the output graph $Y_k$ has maximum node degree
$n-1$ and length stretch factor $\frac{1}{1 - 2 \sin{\pi/k}}$.
The first property (high degree) is the main drawback
of the Yao graph. In wireless networks for example, high degree is
undesirable because nodes communicating with too many nodes directly
may experience large overhead that could otherwise be distributed
among several nodes. The Yao-Yao graph $YY_k$ has been proposed
in~\cite{li02sparse} to overcome this shortcoming: at each node $u$
in the Yao graph, discard all directed edges $\overrightarrow{vu}$
from each cone centered at $u$, except for the one with minimum \id.
This filtering step is described in
Table~\ref{tab:yy}.

\begin{table}[hptb]
\begin{center}
\vspace{1mm} \fbox{
\begin{minipage}[h]{0.8\linewidth}
\centerline{\rys($Y_k = (V,E_Y), k$)}
\vspace{1mm}{\hrule width\linewidth}\vspace{2mm} 
\small{\begin{tabbing}
.....\=......\=.......\=.......\=..................................................\kill
Set $E_{YY} \leftarrow E_{Y}$ and $YY_k \leftarrow (V, E_{YY})$. \\
Use the same cone partition as in the \ys. \\
\\
For each node $v$ and each cone $K_v$  \\
   \> Eliminate from $E_{YY}$ all edges $\overrightarrow{uv}$ with sink $v$ that lie in $K_v$, \\
   \> except for the one with the smallest \id. \\
\\
{\bf Output $YY_k = (V, E_{YY})$ (viewed as an undirected graph)}.
\end{tabbing}}
\end{minipage}
}
\vspace{1mm}
\end{center}
\vspace{-1em}\caption{The reverse Yao step.}
\label{tab:yy}
\end{table}

\noindent The output graph $YY_k$ has maximum node degree $2k$, a
constant. However, the tradeoff is unclear in that the question of
whether $YY_k$ is a spanner or not remains open.

\section{\yy-Spanner for Civilized UDGs}
\label{sec:yys.civilized}
Note that any UDG of constant aspect ratio $\Delta$ is a
$1/\Delta$-civilized UDG, and any $\lambda$-civilized UDG has aspect
ratio $1/\lambda$. Therefore, from here on will refer to
$\lambda$-civilized UDGs only.
The \yys\ algorithm applied on a UDG $G = (V, E)$ comprises the Yao
and reverse-Yao steps:
\begin{itemize}
\item [1.] Execute \ys$(G, k)$. The result is the Yao spanner $Y_k = (V, E_Y)$.
\item [2.] Execute \rys$(Y_k, k)$. The result is the Yao-Yao graph $YY_k = (V, E_{YY})$.
\end{itemize}
We now show that $YY_k$ is a length spanner for any civilized UDG.
In proving this, we will make use of the following lemma:

\begin{lemma}[Czumaj and Zhao~\cite{CZ04}]
Let $0 < \theta < \frac{\pi}{4}$ and $t \ge \frac{1}{\cos\theta -
\sin\theta}$. Let $u, v, z$ be three points in the plane with
$\widehat{vuz} \le \theta$. Suppose further that $|uz| \le |uv|$.
Then the edge $\{u, z\}$ followed by a $t$-spanner path from $z$ to
$v$ is a $t$-spanner path from $u$ to $v$ (see Figure~\ref{fig:cz}).
\label{lem:cz}
\end{lemma}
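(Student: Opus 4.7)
The plan is to show directly that the path obtained by concatenating the edge $uz$ with any $t$-spanner path from $z$ to $v$ has total length at most $t|uv|$. Since the $t$-spanner path from $z$ to $v$ has length at most $t|zv|$, it suffices to establish the single inequality
\[
|uz| + t|zv| \le t|uv|.
\]
This reduces the lemma to a purely geometric estimate bounding $|zv|$ from above in the triangle $uzv$, which will then combine cleanly with the hypothesis on $t$.

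To obtain that estimate, I would use a perpendicular projection. Let $\alpha = \widehat{vuz}$, so by hypothesis $\alpha \le \theta$. Drop the perpendicular from $z$ onto the line through $u$ and $v$, and call its foot $w$. Because $|uz|\cos\alpha \le |uz| \le |uv|$, the point $w$ lies on the segment $uv$, yielding $|uw| = |uz|\cos\alpha$, $|wv| = |uv| - |uz|\cos\alpha$, and $|zw| = |uz|\sin\alpha$. Applying the triangle inequality in the right triangle $zwv$ (the hypotenuse is bounded by the sum of the two legs) gives
\[
|zv| \;\le\; |wv| + |zw| \;=\; |uv| - |uz|(\cos\alpha - \sin\alpha).
\]
Since $0 \le \alpha \le \theta < \pi/4$, the function $\cos\alpha-\sin\alpha$ is positive and monotonically decreasing, so $\cos\alpha - \sin\alpha \ge \cos\theta - \sin\theta > 0$. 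Substituting yields
\[
|uz| + t|zv| \;\le\; t|uv| + |uz|\bigl(1 - t(\cos\theta - \sin\theta)\bigr),
\]
and the hypothesis $t \ge 1/(\cos\theta - \sin\theta)$ makes the bracketed coefficient nonpositive, closing the inequality.

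There is no serious obstacle; the whole proof is a short calculation once the right geometric bound is chosen. The one subtle step is the upper bound on $|zv|$: the law-of-cosines identity $|zv|^2 = |uv|^2 - 2|uv||uz|\cos\alpha + |uz|^2$ is not in a convenient form, so I would instead use the projection trick above, which is valid precisely because $|uz|\le|uv|$ forces the foot $w$ to lie on segment $uv$. This bound is what makes the threshold $t = 1/(\cos\theta - \sin\theta)$ emerge naturally, and the rest is routine algebra.
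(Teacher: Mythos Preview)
Your argument is correct. The reduction to the single inequality $|uz| + t|zv| \le t|uv|$ is exactly right, the projection bound $|zv| \le |uv| - |uz|(\cos\alpha - \sin\alpha)$ is valid (the hypothesis $|uz|\le|uv|$ guarantees the foot $w$ lies on the segment $uv$, as you noted), and the monotonicity of $\cos\alpha-\sin\alpha$ on $[0,\pi/4)$ gives the final step cleanly.

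As for comparison: the paper does not actually prove this lemma. It is quoted from Czumaj and Zhao~\cite{CZ04} and used as a black box, so there is no in-paper proof to compare against. That said, your projection-and-triangle-inequality technique is precisely the device the paper itself employs in the proof of Theorem~\ref{thm:civilized.spanner} (see inequalities~(\ref{eq:p1}) and~(\ref{eq:p2})), so your approach is entirely in the spirit of the surrounding arguments and almost certainly matches the original proof in~\cite{CZ04}.
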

%
\begin{figure}[htbp]
\centering
\includegraphics[width=0.36\linewidth]{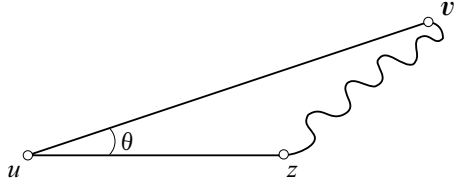}
\caption{If $\theta < \pi/4$ and $p(z \rightsquigarrow v)$ is a
$t$-spanner path, then $uz \oplus p(z \rightsquigarrow v)$ is a
$t$-spanner path.} \label{fig:cz}
\end{figure}

\begin{theorem}
Let $G = (V, E)$ be a $\lambda$-civilized graph, and let $\yy_k$ be
the Yao-Yao structure for $G$. Then $YY_k$ is a spanner with length
stretch factor $t \ge \frac{\lambda}{(\lambda+1)(\cos 2\pi/k-\sin
2\pi/k)-1}$, for any integer $k > 8$ satisfying the condition $(\cos
2\pi/k - \sin 2\pi/k) > \frac{1}{\lambda+1}$.
\label{thm:civilized.spanner}
\end{theorem}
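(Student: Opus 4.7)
The plan is to argue by strong induction on the edge identifier $\id(uv)$, showing that every edge $uv \in E$ admits a $t$-spanner path in $YY_k$. For the base case, the globally minimum-$\id$ edge $uv$ is picked by the \ys\ at $u$ in $K_u(v)$ and also survives the \rys\ at $v$ in $K_v(u)$, since no competing edge has smaller $\id$; hence $uv \in YY_k$.

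For the inductive step, I would let $\overrightarrow{ua}$ be the Yao edge selected in cone $K_u(v)$ at $u$, so $|ua|\le|uv|$ and $\widehat{vua}\le\theta:=2\pi/k$. Then let $\overrightarrow{wa}$ be the edge surviving the reverse Yao step in cone $K_a(u)$ at $a$: by construction $w$ lies in the same cone as $u$ at $a$, $|wa|\le|ua|$, and $wa\in YY_k$. When $w=u$, the edge $ua$ itself lies in $YY_k$, and (since $\theta<\pi/4$ for $k>8$) prepending $ua$ to the inductively-guaranteed $t$-spanner path $a\rightsquigarrow v$ yields a $t$-spanner path from $u$ to $v$ by Lemma~\ref{lem:cz}. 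When $w\neq u$, I would construct the concatenation
\[
p(u\rightsquigarrow w)\oplus wa\oplus p(a\rightsquigarrow v),
\]
whose two sub-paths are supplied by the inductive hypothesis applied to the strictly shorter edges $uw$ and $av$ (both still present in $E$, since their lengths are below $|uv|\le 1$).

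The heart of the proof is bounding the length of this concatenated path. Projecting $a$ onto the line through $u,v$ and $w$ onto the line through $a,u$, and applying $\sqrt{x^2+y^2}\le|x|+|y|$, I would derive
\[
|av|\le |uv|-|ua|\,c,\qquad |uw|\le|ua|-|wa|\,c,
\]
with $c:=\cos\theta-\sin\theta$. Substituting these into $t|uw|+|wa|+t|av|$ bounds the total length by $t|uv|+t|ua|(1-c)+|wa|(1-tc)$. The main obstacle, and the only place where the civilized assumption enters, is forcing the remainder $t|ua|(1-c)+|wa|(1-tc)$ to be non-positive. Using $|ua|\le 1$ (from the UDG) and $|wa|\ge \lambda$ (from $\lambda$-civilization), elementary algebra reduces this to $t\ge \lambda/((\lambda+1)c-1)$, which is precisely the hypothesis of the theorem; the side condition $c>1/(\lambda+1)$ keeps the denominator positive, and the resulting bound automatically subsumes the $t\ge 1/c$ requirement of Lemma~\ref{lem:cz} because $c\le 1$.
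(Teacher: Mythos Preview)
Your proposal is correct and follows essentially the same approach as the paper: strong induction on edge identifiers, the same projection inequalities $|av|\le|uv|-|ua|(\cos\theta-\sin\theta)$ and $|uw|\le|ua|-|wa|(\cos\theta-\sin\theta)$, and the same appeal to $|wa|\ge\lambda$ together with $|ua|\le 1$ to force the residual term non-positive. The only difference is organizational: the paper splits explicitly on whether $\overrightarrow{uv}\in E_Y$ (your case $a=v$) and then on whether the Yao edge survives the \rys\ (your case $w=u$), whereas you fold all of this into a single unified treatment.
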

\begin{proof}
The proof is by induction on the rank of edges in the edge set $E$ ordered
by increasing \id. The base case corresponds to the edge $uv \in E$ of
rank 0 (i.e., with smallest $\id(uv)$). Assume without loss of generality
that $\id(uv) = \id(\overrightarrow{uv})$. Since $\overrightarrow{uv}$
has the smallest \id\ among all edges in $K_u(v)$, $\overrightarrow{uv}$
gets added to $E_Y$ in the \ys. Furthermore, since
$\overrightarrow{uv}$ has the smallest \id\ among all edges in
$E_{YY} \cap K_v(u)$ directed into $u$, $\overrightarrow{uv}$ does
not get discarded in the \rys. Thus $uv$ is an edge in $\yy_k$ and so the
theorem holds for the base case.

\begin{figure}[hptb]
\centering
\begin{tabular}{c@{\hspace{0.05\linewidth}}c}
\includegraphics[width=0.43\linewidth]{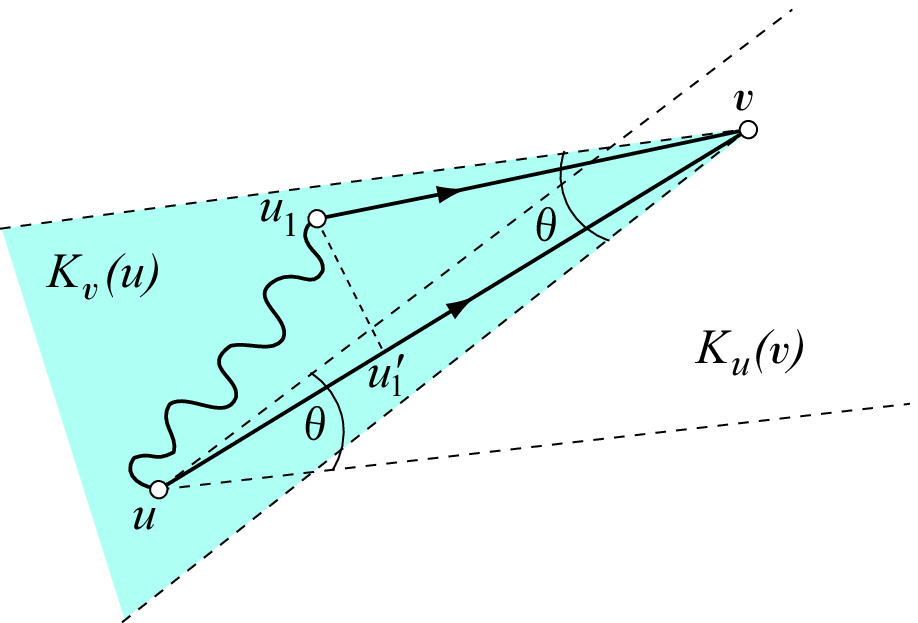} &
\includegraphics[width=0.45\linewidth]{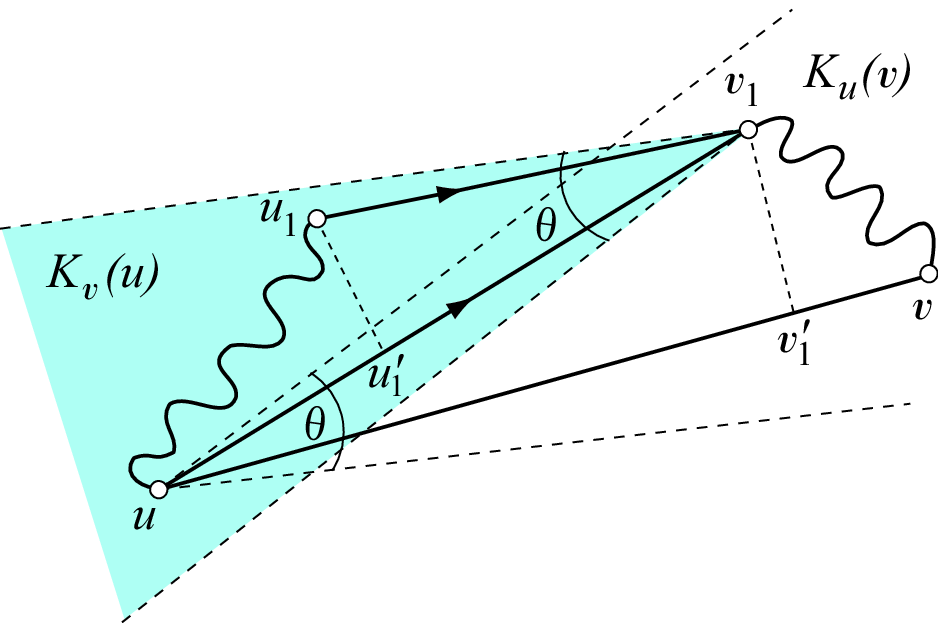} \\
(a) & (b)
\end{tabular}
\caption{Proof of Theorem~\ref{thm:civilized.spanner}: (a) Case 1:
$p(u \rightsquigarrow v) \leftarrow p(u \rightsquigarrow u_1) \oplus
u_1v$; (b) Case 2: $p(u \rightsquigarrow v) \leftarrow p(u
\rightsquigarrow u_1) \oplus u_1v_1 \oplus p(v_1 \rightsquigarrow
v)$.} \label{fig:case12}
\end{figure}

The inductive hypothesis tells that $\yy_k$ contains $t$-spanner
paths between the endpoints of any edge $uv \in E$ of rank no
greater than some value $j \ge 0$. To prove the inductive step,
consider the edge $uv \in E$ of rank $j+1$. Assume without loss of
generality that $\id(uv) = \id(\overrightarrow{uv})$. We discuss two
cases, depending on whether $\overrightarrow{uv}$ belongs to $E_{Y}$
or not. Let $\theta = 2\pi/k < \pi/4$.


\paragraph{Case 1:} $\overrightarrow{uv} \in E_{Y}$. If
$\overrightarrow{uv} \in E_{YY}$ the proof is finished, so assume
the opposite. Note that $\overrightarrow{uv} \not\in E_{YY}$ happens
when $v$ eliminates $\overrightarrow{uv}$ in the \rys\ in favor of
another edge $\overrightarrow{u_1v}$, with
$\id(\overrightarrow{u_1v}) < \id(\overrightarrow{uv})$ (see
Figure~\ref{fig:case12}a). Since $u$ and $u_1$ both lie in a same
cone $K_v$, we have that $\widehat{uvu_1} \le \theta < \pi/4$
and therefore $|u_1 u| < |uv|$. It follows that $\id(u_1u) < \id(uv)$.
Conform the inductive hypothesis,
$YY_k$ contains a $t$-spanner path $p(u \rightsquigarrow u_1)$
from $u$ to $u_1$. These together with Lemma~\ref{lem:cz}
show that $p(u \rightsquigarrow u_1) \oplus u_1v$ is a $t$-spanner
path from $u$ to $v$ in $YY_k$.

\paragraph{Case 2:} $\overrightarrow{uv} \not\in E_{Y}$.
Let $uv_1 \in E \cap K_u(v)$ be the edge selected by
$u$ in the \ys. Thus we have that $\id(uv_1) < \id(uv)$
and therefore $|uv_1| \le |uv|$.
If $\overrightarrow{uv_1} \in E_{YY}$, then arguments
similar to the ones used for Case 1 show that $uv_1 \oplus p(v_1
\rightsquigarrow v)$ is a $t$-spanner path from $u$ to $v$ in $YY_k$; the
existence of a $t$-spanner path $p(v_1 \rightsquigarrow v)$ in $YY_k$
is ensured by the inductive hypothesis.

Consider now the case where
$\overrightarrow{uv_1} \not\in E_{YY}$. 
Since $\overrightarrow{uv_1} \in E_Y$ and $\overrightarrow{uv_1}
\not\in E_{YY}$, the edge $\overrightarrow{uv_1}$ must have been
eliminated by $v_1$ in the \rys\ in favor of
another edge $\overrightarrow{u_1v_1}$, with $\id(u_1v_1) < \id(uv_1)$
(refer to Figure~\ref{fig:case12}b).
Let $u'_1$ be the projection of $u_1$ on $uv_1$.
By the triangle inequality,

\begin{equation}
|uu_1| \le |uu'_1|+|u'_1u_1| = |uv_1| - |u'_1v_1| + |u'_1u_1| \le
|uv_1| - |u_1v_1|\cos\theta + |u_1v_1|\sin\theta. \label{eq:p1}
\end{equation}
Similarly, if $v'_1$ is the projection of $v_1$ on $uv$, we have
\begin{equation}
|v_1v| \le |vv'_1|+|v'_1v_1| = |uv| - |uv'_1| + |v'_1v_1| \le |uv| -
|uv_1|\cos\theta + |uv_1|\sin\theta. \label{eq:p2}
\end{equation}
Since $|uu_1| < |uv_1| \le |uv|$ and $|v_1v| < |uv|$, $\yy_k$
contains $t$-spanner paths $p(u \rightsquigarrow u_1)$ and $p(v_1
\rightsquigarrow v)$ (by the inductive hypothesis).
Let $P_1 = p(u \rightsquigarrow u_1) \oplus 
p(v_1 \rightsquigarrow v)$. We show
that the path $P = P_1 \oplus u_1v_1$ is a $t$-spanner path from $u$ to $v$,
thus proving the inductive step.
The length of $P_1$ is
$$
|P_1| \le t(|uu_1| + |v_1v|).
$$
Substituting inequalities~(\ref{eq:p1}) and~(\ref{eq:p2}) yields
\begin{equation}
|P_1| \le t |uv| + t|uv_1|(1-\cos\theta +\sin\theta) - t|u_1v_1|(\cos\theta  - \sin\theta).
\label{eq:p3}
\end{equation}
Thus the length of $P = P_1 \oplus u_1v_1$ is
\begin{equation}
|P| \le t |uv| + t|uv_1|(1-\cos\theta +\sin\theta) - |u_1v_1|(t\cos\theta  - t\sin\theta -1).
\label{eq:p4}
\end{equation}
Since the input graph $G$ is $\lambda$-civilized, we have that
$|u_1v_1| \ge \lambda$. This along with the inequality~(\ref{eq:p4})
and the fact that $|uv_1| \le 1$ implies
$$
|P| \le t |uv| + (t(1-\cos\theta +\sin\theta) - \lambda(t\cos\theta  - t\sin\theta -1)).
$$
Note that the second term on the right side of the inequality above is
non-positive for any $t \ge \frac{\lambda}{(\lambda+1)(\cos\theta-\sin\theta)-1}$ and for any $\theta$
satisfying the condition $\cos\theta - \sin\theta >
\frac{1}{\lambda+1}$. This completes the proof.
\end{proof}

\medskip
\noindent Theorem~\ref{thm:civilized.spanner} implies that, for
fixed small $\lambda > 0$ and for any $\e > 0$, one can choose
$\theta$ such that $\cos\theta - \sin\theta =
\frac{\lambda+\e+1}{(\lambda+1)(\e+1)} > \frac{1}{\lambda+1}$, to
produce a $t$-spanner $YY_k$ with $t \ge
\frac{\lambda}{(\lambda+1)(\cos\theta-\sin\theta)-1} = 1+\e$. So we
have the following result:

\begin{corollary}
The $YY_k$ structure produced by the \yys\ algorithm for a
given civilized UDG is a spanner with maximum degree $2k$,
length stretch factor $(1+\e)$, and power stretch factor
$(1+\e)^\beta$, for any real $\e>0$ and integer
$k > 8$ satisfying the condition $\cos 2\pi/k - \sin 2\pi/k =
\frac{\lambda+\e+1}{(\lambda+1)(\e+1)}$.
\label{cor:1e}
\end{corollary}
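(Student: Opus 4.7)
The plan is to assemble the corollary from three ingredients already in hand. First, I would invert the bound of Theorem~\ref{thm:civilized.spanner} to find the value of $\cos\theta - \sin\theta$ that makes the stretch factor exactly $1+\e$. Starting from $t = \frac{\lambda}{(\lambda+1)(\cos\theta-\sin\theta)-1}$ with $\theta = 2\pi/k$, setting $t = 1+\e$ and solving for $\cos\theta - \sin\theta$ gives $\frac{\lambda+\e+1}{(\lambda+1)(\e+1)}$, which is precisely the condition stated in the corollary. Before invoking the theorem I would verify that this quantity exceeds $\frac{1}{\lambda+1}$: after clearing common factors, the inequality reduces to $\lambda + \e + 1 > \e + 1$, i.e., $\lambda > 0$, which trivially holds for a civilized UDG. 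Hence the theorem's hypothesis is met and the length-stretch claim follows.

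Next I would address the maximum degree bound. The \rys\ step keeps at most one incoming edge per cone at each node $v$, for a total of at most $k$ incoming edges at $v$; independently, the \ys\ produces at most one outgoing edge per cone at each node $u$, for a total of at most $k$ outgoing edges at $u$. Viewing $YY_k$ as an undirected graph, each vertex is incident to at most $2k$ edges.

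Finally, the power stretch factor follows immediately from Fact~1: since $YY_k$ has length stretch factor $1+\e$, its power stretch factor is at most $(1+\e)^\beta$.

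There is no real obstacle here; the heavy lifting was already done in Theorem~\ref{thm:civilized.spanner}. The only step that requires care is the routine algebra inverting the stretch-factor bound and confirming that the chosen $\theta$ still satisfies the theorem's hypothesis $\cos\theta - \sin\theta > \frac{1}{\lambda+1}$.
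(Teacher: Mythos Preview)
Your proposal is correct and follows essentially the same approach as the paper: the paper derives the corollary directly from Theorem~\ref{thm:civilized.spanner} by inverting the stretch-factor bound to obtain $\cos\theta - \sin\theta = \frac{\lambda+\e+1}{(\lambda+1)(\e+1)}$ and noting this exceeds $\frac{1}{\lambda+1}$. Your version is in fact slightly more complete, since you also spell out the $2k$ degree bound and the application of Fact~1 for the power stretch factor, which the paper treats as already established.
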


\section{Efficient Local \yssa\ Algorithm for UDGs}
\label{sec:alg}
We have established in Section~\ref{sec:yys.civilized} that the
Yao-Yao graph is a length spanner for civilized UDGs. The question
of whether the Yao-Yao graph is a length spanner for arbitrary UDGs
remains open. 
In order to guarantee both bounded degree and the length spanner
property, Li et al.~\cite{LiWanWang} suggest a sparse topology,
called \emph{Yao-Sink}. Let $G= (V, E)$ be a UDG. The Yao-Sink
algorithm applied on $G$ consists of two steps: (1) Execute
\ys($G,k$) to produce the Yao spanner $Y_k = (V, E_k)$, and (2)
Execute \yss($Y_k, k$) to reduce the degree of $Y_k$. The \yss\ is
described in detail in Table~\ref{tab:ys}.
In ~\cite{YaoLi05} the authors show that the output $YS_k$ generated
by the Yao-Sink method has maximum degree $k(k+2)$ and length
stretch factor $\left(\frac{1}{1 - 2 \sin{\pi/k}}\right)^2$. In
fact, the authors show a more general result that applies to
\emph{mutual inclusion graphs}, which allow for non-uniform
transmission ranges at nodes.

\begin{table}[hptb]
\begin{center}
\vspace{1mm} \fbox{
\begin{minipage}[h]{0.8\linewidth}
\centerline{\yss($Y_k = (V,E_Y), k$)}
\vspace{1mm}{\hrule width\linewidth}\vspace{2mm} 
\small{\begin{tabbing}
.....\= .......\=..........\=.......\=.......\=..................................................\kill
Use the same cone partition as in the \ys. \\
1. \> Set $E_{YS} \leftarrow \emptyset$ and $YS_k \leftarrow (V, E_{YS})$. \\
2. \> For each node $v$ and each cone $K_v$  \\
\vspace{2mm}
\>   {\bf \{Build the tree $T(v)$ corresponding to $K_v$.\} } \\
\>   2.1 \> Let $I$ be the set of vertices $u$ such that $\overrightarrow{uv} \in E_{Y} \cap K_v$. \\
\>       \> Set $I(v) \leftarrow I$. Initialize the ordered vertex sequence $J \leftarrow (v)$. \\
\>   2.2 \> Initialize $T(v) \leftarrow \emptyset$. Repeat until $I$ is empty \\
\>   \> 2.2.1 \> Remove the first vertex $u$ from the sequence $J$. \\
\>   \> 2.2.2 \> For each cone $K_u$ \\
\>   \> \> \> Let $w \in I(u) \cap K_u$ be the node that minimizes $\id(\overrightarrow{wu})$ (if any). \\
\>   \> \> \> Add $\overrightarrow{wu}$ to $T(v)$ and move $w$ from $I$ to $J$. \\
\>   \> \> \> Set $I(w) \leftarrow I(u) \cap K_u$. \\
\>  2.3  \> Add all edges of $T(v)$ to $E_{YS}$. \\
\\
{\bf Output $YS_k = (V, E_{YS})$ (viewed as an undirected graph)}.
\end{tabbing}}
\end{minipage}
}
\vspace{1mm}
\end{center}
\vspace{-1em}\caption{The Sink step.}
\label{tab:ys}
\end{table}

The following two lemmas (Lemmas~\ref{lem:ys.seq}
and~\ref{lem:ys.seq2}) identify two important properties of the
output spanner $YS_k$ generated by the Yao-Sink method.
Specifically, they show the existence of a particular path in $YS_k$
corresponding to each Yao edge removed in the \yss.


\begin{lemma}
For each edge $\overrightarrow{uv} \in E_Y$, there is a $uv$-path
$\Pi = w_0w_1, w_1w_2, \ldots, w_{h-1}w_h$ in $K_v(u)$, with $w_0 =
v$ and $w_h = u$,
such that $\overrightarrow{w_{i}w_{i-1}} \in E_{YS} \cap
K_{w_{i-1}}(u)$ and $\id(\overrightarrow{w_{i}w_{i-1}}) <
\id(\overrightarrow{uw_{i-1}})$, for each $i = 1, 2, \ldots, h$.
\label{lem:ys.seq}
\end{lemma}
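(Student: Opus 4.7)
The plan is to extract $\Pi$ as the unique root-to-$u$ path in the sink tree $T(v)$ that \yss\ builds for the cone $K_v(u)$. Since $\overrightarrow{uv} \in E_Y \cap K_v(u)$, the vertex $u$ belongs to the initial set $I$ of step 2.1, so $u$ must eventually be enqueued into $J$ and hence appears as a node of $T(v)$; the root-to-$u$ path in this tree will serve as $\Pi$.

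To formalize this, I would construct $\Pi$ inductively in lock-step with the iterations of step 2.2, carrying the invariant that at the moment $w_{i-1}$ is dequeued from $J$ in step 2.2.1, the vertex $u$ still lies in $I(w_{i-1})$. The base case $i=1$ holds because $w_0 = v$ and $I(v) = I$ contains $u$. For the inductive step, apply step 2.2.2 to $w_{i-1}$ and focus on the single cone $K_{w_{i-1}}(u)$. The invariant gives $u \in I(w_{i-1}) \cap K_{w_{i-1}}(u)$, so the algorithm picks a minimizer $w_i$ of $\id(\overrightarrow{\,\cdot\, w_{i-1}})$ over this set. By construction $\overrightarrow{w_i w_{i-1}}$ is added to $T(v)$ (and hence to $E_{YS}$ at step 2.3), lies in $K_{w_{i-1}}(u)$, and satisfies $\id(\overrightarrow{w_i w_{i-1}}) \le \id(\overrightarrow{u w_{i-1}})$ with strict inequality whenever $w_i \neq u$. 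If $w_i = u$, set $h = i$ and stop; otherwise the update $I(w_i) \leftarrow I(w_{i-1}) \cap K_{w_{i-1}}(u)$ still contains $u$, so the invariant carries to the next iteration.

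Termination is immediate: the candidate set $I(w_{i-1}) \cap K_{w_{i-1}}(u)$ strictly shrinks every iteration (since the chosen $w_i$ is moved out of $I$), so eventually $u$ becomes the minimizer and is selected, giving $w_h = u$. The geometric claim that $\Pi$ lies in $K_v(u)$ follows because every $w_i$ belongs to $I \subseteq K_v(u)$ by the definition of $I$ in step 2.1. The main obstacle I foresee is purely bookkeeping: making the invariant ``$u \in I(w_{i-1})$ at the instant $w_{i-1}$ is dequeued'' watertight requires a careful read of how $I$, the per-vertex sets $I(\cdot)$, and the queue $J$ evolve across the iterations of step 2.2. Once this invariant is in hand, each conclusion of the lemma is a direct consequence of the way $w_i$ is chosen in step 2.2.2.
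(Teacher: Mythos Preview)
Your proposal is correct and follows essentially the same approach as the paper: both arguments trace the root-to-$u$ path in the sink tree $T(v)$ for the cone $K_v(u)$, driven by the invariant that $u$ remains in $I(w_{i-1})$ at the time $w_{i-1}$ is processed. The only cosmetic difference is that the paper packages this as an extremal argument (take a longest path satisfying the conditions and derive a contradiction if it stops short of $u$), whereas you give the equivalent direct inductive construction.
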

\begin{proof}
Let $I = I(v)$ be the vertex set defined in Step 2.1 of \yss\ for
node $v$ and cone $K_v(u)$. If $u \in I(v) \cap K_v(u)$ minimizes
$\id(\overrightarrow{uv})$, then the path sought is $\Pi = vu$ and
the proof is finished. Otherwise, let $\Pi = w_0w_1, w_1w_2, \ldots,
w_{p-1}w_p$ be a longest path in $K_v(u)$ that satisfies the
conditions of the lemma: $\overrightarrow{w_{i}w_{i-1}} \in E_{YS}
\cap K_{w_{i-1}}(u)$ and $\id(\overrightarrow{w_{i}w_{i-1}}) <
\id(\overrightarrow{uw_{i-1}})$, for each $i = 1, 2, \ldots, p$. We
prove by contradiction that $w_p = u$. Assume to the contrary that
$w_p$ and $u$ are distinct.
\begin{figure}[htbp]
\centering
\begin{tabular}{c@{\hspace{0.06\linewidth}}c}
\includegraphics[width=0.45\linewidth]{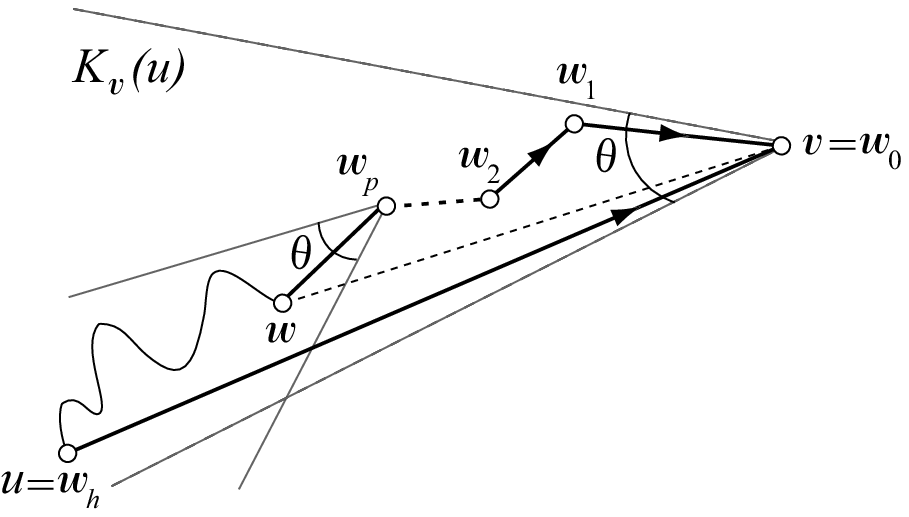} &
\includegraphics[width=0.45\linewidth]{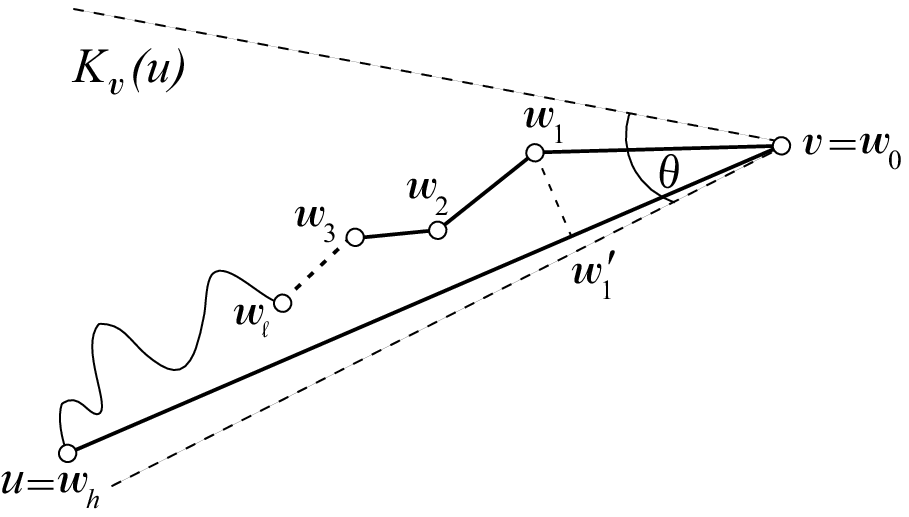} \\
(a) & (b)
\end{tabular}
\caption{(a) Lemma~\ref{lem:ys.seq}: path $\Pi = w_0w_1, w_1w_2,
\ldots, w_{h-1}w_h$ (b) Lemma~\ref{lem:ys.seq2}: $\widehat{w_{1}uv}
\le \theta$. } \label{fig:alg}
\end{figure}
%
Since $\overrightarrow{w_{p}w_{p-1}} \in E_{YS}$, it must be that
$w_p \in I(w_{p-1})$. Furthermore, since $w_p$ and $u$ lie in a same
cone $K_{w_{p-1}}(u)$, the set $I(w_p)$ defined in Step 2.2.2 of the
\yss\ is $I(w_p) = I(w_{p-1}) \cap K_{w_{p-1}}(u)$ and includes both
$w_p$ and $u$.

Consider now the instance when $w_p$ gets processed (i.e, it gets
removed from $J$ in Step 2.2.1 of \yss). See Figure~\ref{fig:alg}a.
%
%
First observe that $I(w_p) \cap K_{w_{p}}(u)$ is nonempty, since it
contains at least the node $u$. This implies that there exists $w
\in I(w_p) \cap K_{w_p}(u)$ that minimizes
$\id(\overrightarrow{ww_p})$. It follows that
$\overrightarrow{ww_{p}} \in E_{YS} \cap K_{w_{p}}(u)$ and either $w
= u$ or $\id(\overrightarrow{ww_{p}}) <
\id(\overrightarrow{uw_{p}})$. Either case contradicts our
assumption that $\Pi$ is a longest path that satisfies the
conditions of the lemma.
\end{proof}

\medskip
\noindent In the context of Lemma~\ref{lem:ys.seq}, we next prove
the existence of a long enough subpath of $\Pi$ from $v$ to one of
the vertices $w_\ell \in \Pi$ that closely approximates the direct
link $vw_\ell$.

\begin{lemma}
Let $\Pi = w_0w_1, w_1w_2, \ldots, w_{h-1}w_h$, with $w_0 = v$ and
$w_h = u$,  be the path identified in Lemma~\ref{lem:ys.seq}
corresponding to a given edge $\overrightarrow{uv} \in E_Y$. Then
there exists $\ell \le h$ such that $|vw_{\ell}| \ge
|uv|/(2\cos\theta)$ and
\begin{eqnarray} \nonumber \sum_{i=0}^{\ell-1} |w_iw_{i+1}| \le
\frac{|vw_\ell|}{\cos 2\theta}.
\end{eqnarray}
\label{lem:ys.seq2}
\end{lemma}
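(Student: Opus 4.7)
The plan is to take $\ell$ to be the smallest index in $\{1, \ldots, h\}$ with $|vw_\ell| \ge |uv|/(2\cos\theta)$. This $\ell$ is well-defined because $|vw_h| = |uv|$ and $2\cos\theta > 1$ in our regime $\theta = 2\pi/k < \pi/4$. I would then verify the second inequality by projecting the polyline $v = w_0, w_1, \ldots, w_\ell$ onto the line through $v$ and $u$ and showing that every edge of this prefix is nearly aligned with that line.

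The crucial geometric fact I would establish is that $\widehat{w_i u v} \le \theta$ for every $i < \ell$; this is the angle highlighted in Figure~\ref{fig:alg}(b) for the representative case $i=1$. The proof is a short coordinate computation: place $v$ at the origin and $u$ on the positive $x$-axis, write $w_i = (r_i\cos\phi_i, r_i\sin\phi_i)$, and note that $w_i \in K_v(u)$ gives $|\phi_i|\le\theta$ while minimality of $\ell$ gives $r_i < |uv|/(2\cos\theta)$. Then
\[
\tan \widehat{w_i u v} \;=\; \frac{r_i \, |\sin \phi_i|}{|uv| - r_i \cos \phi_i},
\]
and I would maximize this over the admissible region. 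The expression is strictly increasing in $r_i$, so the supremum is attained at $r_i = |uv|/(2\cos\theta)$; a one-variable derivative then shows the only interior stationary point satisfies $\cos\phi_i = 1/(2\cos\theta)$, and the assumption $\theta < \pi/4$ (equivalently $\cos^2\theta > 1/2$) is exactly what places this stationary point strictly above $\theta$. Hence the maximum on $[0,\theta]$ is at $\phi_i = \theta$ and equals $\tan\theta$, giving $\widehat{w_iuv}\le\theta$.

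Given this angle bound, the rest is clean. The hypothesis $w_{i+1} \in K_{w_i}(u)$ forces $\widehat{u w_i w_{i+1}} \le \theta$, and combining with $\widehat{w_i u v} \le \theta$ yields that $w_i w_{i+1}$ makes angle at most $2\theta < \pi/2$ with $\overrightarrow{vu}$. The signed projection of $w_{i+1} - w_i$ onto $\overrightarrow{vu}$ is therefore at least $|w_iw_{i+1}|\cos 2\theta > 0$. Summing over $i = 0, \ldots, \ell-1$ telescopes to the projection of $w_\ell - v$ onto $\overrightarrow{vu}$, whose length is $|vw_\ell|\cos\widehat{w_\ell vu} \le |vw_\ell|$, and the claimed inequality $\sum_{i=0}^{\ell-1}|w_iw_{i+1}| \le |vw_\ell|/\cos 2\theta$ falls out.

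The main obstacle is the angle bound $\widehat{w_i u v} \le \theta$. Its delicacy explains the specific radial threshold $|uv|/(2\cos\theta)$ in the lemma statement: a larger threshold would admit positions of $w_i$ where $\widehat{w_iuv} > \theta$, letting edges escape the $2\theta$-cone around $\overrightarrow{vu}$ that the projection step requires, and the condition $\theta < \pi/4$ is precisely what is needed to rule out such a configuration.
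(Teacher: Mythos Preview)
Your proof is correct and follows essentially the same route as the paper's: the same choice of $\ell$, the same tangent computation to get $\widehat{w_iuv}\le\theta$ for $i<\ell$, and the same projection of each edge $w_iw_{i+1}$ onto the line $vu$ to obtain the $1/\cos 2\theta$ factor. Your presentation is in fact slightly cleaner---you prove the angle bound directly for all $i<\ell$ (rather than by induction) and telescope signed projections, whereas the paper carries an inductive invariant through a three-case analysis on the relative positions of $w_{i-1}$, $w_i$ in $\triangle uvw$; the underlying geometry is identical.
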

\begin{proof}
Let $\ell \le h$ be the smallest index in the sequence $1, 2,
\ldots, h$ such that $|vw_{\ell}| \ge |uv|/(2\cos\theta)$. Since
$|vw_{h}| = |uv| > |uv|/(2\cos\theta)$, such an index always exists.
Let $w'_{i}$ be the projection of $w_{i}$ on $uv$, for each $i$. We
first prove that the following invariant holds:
\begin{itemize}
\item[(a)] $\widehat{w_ivu} \le \theta$, for each $i = 0, 1, \ldots, \ell$.
\item[(b)] $\widehat{w_iuv} \le \theta$, for each $i = 1, \ldots, \ell-1$.
\item[(c)] $|w_iw_{i-1}| \le |w'_iw'_{i-1}|/\cos 2\theta$, for each $i = 1, \ldots, \ell$.
\end{itemize}
Property (a) follows immediately from the fact that $w_i$ and $u$
belong to a same cone $K_v(u)$, for each $i$. The proof for
properties (b) and (c) is by induction on the index $i$. The base
case corresponds to $i = 1$ (i.e, $\Pi = w_0w_1$). See
Figure~\ref{fig:alg}b. We prove that $\widehat{w_1uv} \le \theta$
(claim (b) for the case when $\ell \ge 2$). First observe that
$|w_{1} v| < |uv| /(2\cos \theta)$, otherwise it would contradict
our choice of $\ell$. Thus we have that
$$\tan\widehat{w_{1}uv} = \frac{|w_{1}w'_{1}|}{|uv|-|w'_{1}v|}
                        = \frac{|w_{1}v| \sin\widehat{w_{1}vu}}{|uv|-|w_{1}v|\cos\widehat{w_{1}vu}}
                        \le \frac{|w_{1}v| \sin\theta}{|uv|-|w_{1}v|\cos\theta} < \tan\theta.$$
If follows that $\widehat{w_{1}uv} < \theta$, so claim (b) holds.
For claim (c), note that $|vw_{1}| \le |vw'_{1}|/\cos\theta <
|vw'_{1}|/\cos 2\theta$.
Assume that the claim holds for any index less than $i$ , for some
$i > 1$. To prove the inductive step, consider a path $\Pi = w_0w_1,
\ldots, w_{i-1}w_i$, with $i \le \ell$.
\begin{figure}[hptb]
\centering
\includegraphics[width=0.98\linewidth]{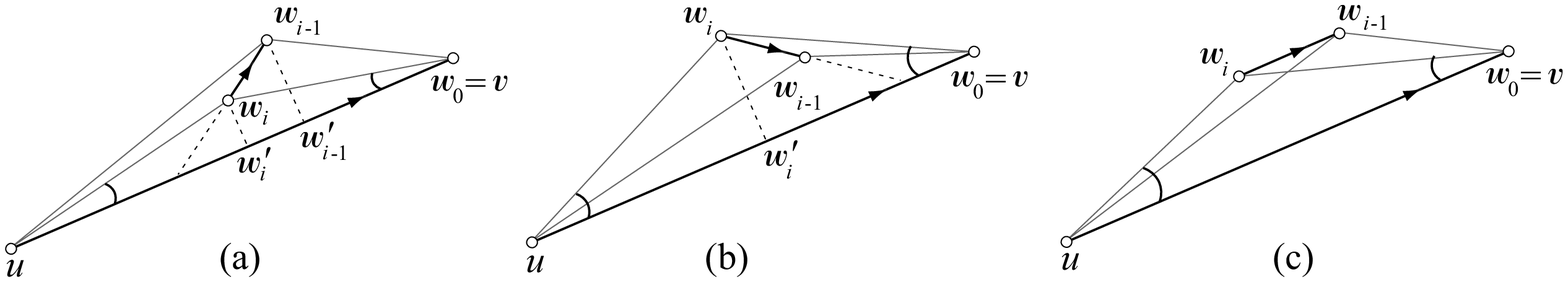}
\caption{Lemma~\ref{lem:ys.seq2} proof: (a) $w_{i} \in \triangle
uw_{i-1}v$ (b) $w_{i-1} \in \triangle uw_{i}v$ (c) Neither (a) not
(b) holds. } \label{fig:seq}
\end{figure}
We distinguish three cases:
\begin{itemize}
\item[(i)] $w_{i} \in \triangle uw_{i-1}v$ (see
Figure~\ref{fig:seq}a). Then $\widehat{w_{i}vu} <
\widehat{w_{i-1}vu} \le \theta$ (this latter inequality is true by
the inductive hypothesis). Also note that the angle formed by
$w_{i-1}w_i$ and $uv$ is no greater than
$\widehat{uw_{i-1}w_i}+\widehat{w_{i-1}uv} \le 2\theta$ and
therefore $|w_iw_{i-1}| \le |w'_iw'_{i-1}|/\cos 2\theta$.

\item[(ii)] $w_{i-1} \in \triangle uw_{i}v$ (see
Figure~\ref{fig:seq}b). 
%
We show that $\widehat{w_{i}uv} \le \theta$ (claim (b) for the case
when $i \le \ell-1$). First observe that the condition $|w_i v| <
|uv| /(2\cos \theta)$ must hold for each $i \le \ell-1$; otherwise,
we could find a lower index $i < \ell$ satisfying the condition
$|vw_i| \ge |uv| /(2\cos \theta)$, contradicting our choice of
$\ell$. 
As before, we have that
$$\tan\widehat{w_{i}uv} = \frac{|w_{i}w'_{i}|}{|uv|-|w'_{i}v|}
                        = \frac{|w_{i}v| \sin\widehat{w_{i}vu}}{|uv|-|w_{i}v|\cos\widehat{w_{i}vu}}
                        \le \frac{|w_{i}v| \sin\theta}{|uv|-|w_{i}v|\cos\theta} < \tan\theta.$$
If follows that $\widehat{w_{i}uv} < \theta$. Also note that in this
case the angle formed by $w_{i-1}w_i$ and $uv$ is no greater than
$\widehat{w_{i}w_{i-1}u} < \theta$ and therefore $|w_iw_{i-1}| \le
|w'_iw'_{i-1}|/\cos\theta < |w'_iw'_{i-1}|/\cos 2\theta$.

\item[(iii)] Neither (i) not (ii) holds (see Figure~\ref{fig:seq}c).
Arguments identical to the ones used in case (ii) above show that
$\widehat{w_{i}uv} \le \theta$. Also note that the angle formed by
$w_{i-1}w_i$ and $uv$ is no greater than $\max\{\widehat{w_{i-1}uv},
\widehat{w_{i}vu}\} \le \theta$ and therefore $|w_iw_{i-1}| \le
|w'_iw'_{i-1}|/\cos\theta < |w'_iw'_{i-1}|/\cos 2\theta$.
\end{itemize}
We have shown that $|w_iw_{i-1}| \le |w'_iw'_{i-1}|/\cos 2\theta$
for each $i = 1, 2, \ldots, \ell$. Summing up over $i$ yields
\begin{eqnarray} \nonumber \sum_{i=0}^{\ell-1} |w_iw_{i+1}| \le
\frac{|w'_{\ell}v|}{\cos 2\theta} < \frac{|w_{ \ell}v|}{\cos
2\theta}.
\end{eqnarray}
This completes the proof.
\end{proof}

\medskip
We will show that Lemmas~\ref{lem:ys.seq} and~\ref{lem:ys.seq2}
enable us to discard some Yao edges from $Y_k$ prior to executing
the \yss, without compromising the spanner property. This leads to
the construction of efficient sparse sink trees in the \yss.
Table~\ref{tab:yssa} describes our method called \yssa\ that
incorporates this intermediate edge filtering step. In the filtering
step, each node $u$ partitions the set of Yao edges incident to $u$
into a number of subsets, such that all edges in a same subset $F_i$
have similar sizes. The aspect ratio of each subset $F_i$ is
controlled by the input parameter $r > 1$. From each subset $F_i$,
only the Yao edge with smallest \id\ is carried on to the \yss; all
other Yao edges from $F_i$ are discarded.
\begin{table}
\begin{center}
\vspace{1mm} \fbox{
\begin{minipage}[h]{0.8\linewidth}
\centerline{Algorithm \yssa($G = (V,E), k, r$)}
\vspace{1mm}{\hrule width\linewidth}\vspace{2mm} 
\small{\begin{tabbing} .....\=
.......\=........\=.......\=.......\=..................................................\kill
1. \> Execute \ys($G, k$). The result is the Yao spanner $Y_k = (V, E_Y)$. \\
\\
2. \> For each node $v \in V$ and each cone $K_v$ \\
   \> \> Let $F \subseteq E_Y \cap K_v$ be the subset of Yao edges from $K_v$ directed into
   $v$ \\
   \> \> Let $\overrightarrow{uv} \in F$ be the edge of minimum \id.
        Let $\Delta$ be the aspect ratio of $F$. \\
   \> \> Partition $F$ into disjoint subsets $F_1, F_2, \ldots, F_s$, with $s = \lceil\log_r\Delta\rceil$, such that \\
   \> \> \> $F_i = \{ab \in F ~|~ |uv|r^{i-1} \le |ab| < |uv|r^{i} \}$. \\
   \> \>  For each $i = 1, 2, \ldots, s$ \\
   \> \> \>  Add to $E_{YE}$ (initially $\emptyset$) the edge from $F_i$ of smallest \id. \\
   \> {\bf Result is $YE_k = (V, E_{YE})$ of degree $O(\log\Delta)$.} \\
\\
3. \> Execute \yss($YE_k, k$). The result is $YES_k = (V, E_{YES})$. \\
\\
{\bf Output $YES_k$ (viewed as an undirected graph)}.
\end{tabbing}}
\end{minipage}
}
\vspace{1mm}
\end{center}
\vspace{-1em}\caption{The Yao-Sparse-Sink algorithm.}
\label{tab:yssa}
\end{table}

It can be verified that the result $YE_k$ of the filtering step is a
spanner for $G$ of maximum degree $O(\log \Delta)$, where $\Delta$
is the aspect ratio of $G$. Because of space constraints we skip
this proof and turn instead to showing that the output $YES_k$
generated by the \yssa\ method is a spanner of constant maximum
degree. Intuitively, if $YES_k$ contains short paths between the
endpoints of an edge processed in the \yss, then $YES_k$ contains
short paths between the endpoints of all nearby edges of similar
sizes.

\begin{theorem}
Let $G = (V, E)$ be a UDG and let $r > 1$, $k \ge 8$, $\theta =
2\pi/k$ and $\lambda = \frac{1}{2r\cos\theta}$  be constants such
that $(\cos \theta - \sin \theta) > \frac{\lambda}{\lambda+1}$. When
run with these values of $r$ and $k$, the output of the \yssa\
algorithm is a $t$-spanner of degree $k(k+2)$, for any $t \ge
\frac{\lambda/\cos(2\theta)}{(\lambda+1)(\cos \theta-\sin
\theta)-1}$. \label{thm:spanner}
\end{theorem}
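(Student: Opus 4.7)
The plan is to run induction on the rank of edges of $E$ in the \id-ordering, paralleling Theorem~\ref{thm:civilized.spanner} but using Lemmas~\ref{lem:ys.seq} and~\ref{lem:ys.seq2} to replace the single ``bridge edge'' of the civilized argument with a short sink-path whose length is controlled by $1/\cos 2\theta$. The base case (smallest-\id\ edge $uv$) is routine: such an edge survives the Yao step at $u$, survives the sparse filter at $v$ (being the minimum-\id\ element of its bucket), and survives the \yss\ selection at $v$, so $uv\in E_{YES}$. For the inductive step I fix an edge $uv\in E$ of rank $j{+}1$, assume $\id(uv)=\id(\overrightarrow{uv})$, and write $\theta=2\pi/k$, $c=\cos\theta-\sin\theta$, $\lambda=\frac{1}{2r\cos\theta}$.

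I introduce the two detour points that take the place of the single detour used in Theorem~\ref{thm:civilized.spanner}. Let $\overrightarrow{uv_1}\in E_Y$ be the edge picked by $u$ in cone $K_u(v)$ during the Yao step ($v_1=v$ if $\overrightarrow{uv}\in E_Y$), giving $|uv_1|\le|uv|$ and $\widehat{v_1uv}\le\theta$. Let $\overrightarrow{u_1v_1}\in E_{YE}$ be the minimum-\id\ edge of the bucket at $v_1$ containing $\overrightarrow{uv_1}$ ($u_1=u$ if $\overrightarrow{uv_1}\in E_{YE}$); then $|u_1v_1|\le|uv_1|<r\,|u_1v_1|$, and $\widehat{u_1v_1u}\le\theta$ since $u$ and $u_1$ share a cone at $v_1$. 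Now apply Lemma~\ref{lem:ys.seq} to $\overrightarrow{u_1v_1}\in E_{YE}$ (the input to the \yss\ call in step~3 of \yssa) to get a path $\Pi=w_0w_1,\ldots,w_{h-1}w_h$ in $E_{YES}$ with $w_0=v_1$, $w_h=u_1$, lying in $K_{v_1}(u_1)$; then Lemma~\ref{lem:ys.seq2} supplies an index $\ell$ with $|v_1w_\ell|\ge|u_1v_1|/(2\cos\theta)\ge\lambda|uv_1|$ and $\sum_{i=0}^{\ell-1}|w_iw_{i+1}|\le|v_1w_\ell|/\cos 2\theta$.

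The proposed spanner path in $YES_k$ is
\[
P:=p(u\rightsquigarrow w_\ell)\ \oplus\ \bigl(w_\ell w_{\ell-1}\cdots w_1w_0\bigr)\ \oplus\ p(v_1\rightsquigarrow v),
\]
where the first and third pieces come from the inductive hypothesis and the middle piece is the reversed prefix of $\Pi$ from Lemma~\ref{lem:ys.seq2}. Using $\widehat{w_\ell v_1u}\le\theta$ (since $u$, $u_1$, $w_\ell$ all lie in the single cone $K_{v_1}$) and $\widehat{v_1uv}\le\theta$, the standard projection estimates mirroring~(\ref{eq:p1}) and~(\ref{eq:p2}) give
\[
|uw_\ell|\le|uv_1|-|v_1w_\ell|\,c\qquad\text{and}\qquad|v_1v|\le|uv|-|uv_1|\,c,
\]
which certify $|uw_\ell|<|uv|$ and $|v_1v|<|uv|$, legitimizing the inductive calls, and also ensure $|uw_\ell|,|v_1v|\le 1$ so the pairs in question are UDG edges. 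Plugging these and Lemma~\ref{lem:ys.seq2}'s bound into $|P|\le t|uw_\ell|+|v_1w_\ell|/\cos 2\theta+t|v_1v|$ yields
\[
|P|\le t|uv|+t|uv_1|(1-c)-|v_1w_\ell|\bigl(tc-\tfrac{1}{\cos 2\theta}\bigr),
\]
and using $|v_1w_\ell|\ge\lambda|uv_1|$ the factor $|uv_1|$ cancels, reducing $|P|\le t|uv|$ to $t[(\lambda+1)c-1]\ge\lambda/\cos 2\theta$, which is precisely the hypothesis of the theorem.

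The degree bound $k(k+2)$ is inherited from the \yss\ analysis of~\cite{YaoLi05}, since step~3 of \yssa\ invokes \yss\ verbatim and that bound is purely combinatorial in the cone partition. The step I expect to be most delicate is bookkeeping across the degenerate configurations ($u_1=u$, $v_1=v$, $w_\ell=u$, or $\ell=h$), where pieces of $P$ collapse; in particular, verifying that the projection estimate $|uw_\ell|\le|uv_1|-|v_1w_\ell|c$ remains valid when $w_\ell$'s projection onto line $uv_1$ would otherwise overshoot the segment requires a small additional geometric check using the constraint that $u$, $u_1$, $w_\ell$ all lie in the single cone $K_{v_1}$.
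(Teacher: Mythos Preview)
Your proposal is correct and follows essentially the same approach as the paper: induction on the \id-rank of edges, the detour through $v_1$ (the Yao choice) and $u_1$ (the bucket representative), the sink-path prefix to $w_\ell$ supplied by Lemmas~\ref{lem:ys.seq} and~\ref{lem:ys.seq2}, and the projection estimates mirroring~(\ref{eq:p1}) and~(\ref{eq:p2}) to close the computation. The only cosmetic difference is that the paper separates Case~1 ($\overrightarrow{uv}\in E_Y$) and handles it with just the single edge $w_1v$ rather than the full prefix to $w_\ell$, whereas you fold everything into one case by allowing $v_1=v$ and $u_1=u$; both organizations arrive at the same final inequality and the same appeal to~\cite{YaoLi05} for the degree bound.
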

\begin{proof}
The degree of $YES_k$ is no greater than the degree of the Yao-Sink
spanner, which is $k(k+2)$~\cite{YaoLi05}. We now prove that $YES_k$
is a $t$-spanner. The proof is by induction on the rank of edges in
the set $E$ ordered by increasing \id. The base case corresponds to
the edge $uv \in E$ of minimum \id. Arguments similar to the ones
used for the base case in Theorem~\ref{thm:civilized.spanner} show
that $uv \in YYS_k$.

The inductive hypothesis tells that $YES_k$ contains $t$-spanner
paths between the endpoints of any edge $uv \in E$ whose rank is no
greater than some value $j \ge 0$. To prove the inductive step,
consider the edge $uv \in E$ of rank $j+1$. Assume without loss of
generality that $\id(uv) = \id(\overrightarrow{uv})$. We discuss two
cases, depending on whether $\overrightarrow{uv}$ belongs to $E_{Y}$
or not.

\paragraph{Case 1:} $\overrightarrow{uv} \in E_{Y}$. Assume first
that $\overrightarrow{uv} \in E_{YE}$. By Lemma~\ref{lem:ys.seq2},
$YES_k$ contains an edge $\overrightarrow{w_1v} \in K_v(u)$. This
implies that $\widehat{uvw_1} \le \theta < \pi/4$. Furthermore,
since $\id(w_1v) < \id(uv)$ (and therefore $|w_1v| \le |uv|$), we
have that $|uw_1| < |uv|$ (see Figure~\ref{fig:algcases}a). Thus we
can use the inductive hypothesis to show that $YES_k$ contains a
$t$-spanner path $p(u \rightsquigarrow w_1)$. By Lemma~\ref{lem:cz},
$p(u \rightsquigarrow w_1) \oplus w_1v$ is a $t$-spanner path in
$YES_k$ from $u$ to $v$.
%
\begin{figure}[htbp]
\centering
\includegraphics[width=0.98\linewidth]{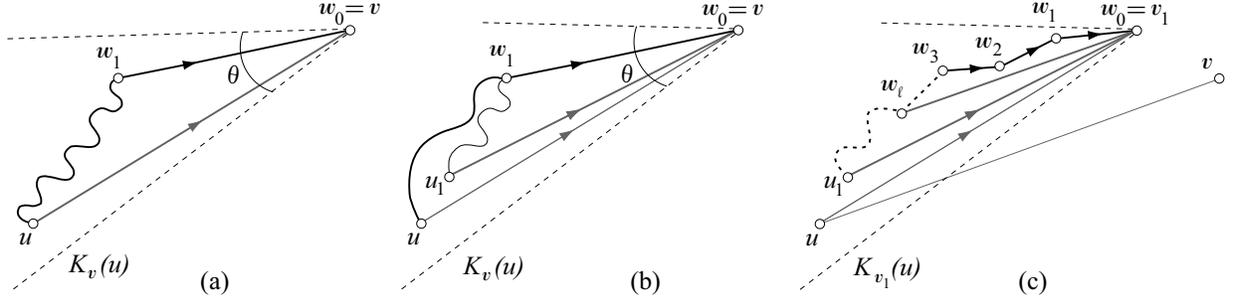}
\caption{Proof of Theorem~\ref{thm:spanner}: (a) Case $\vec{uv} \in
E_{YE}$ (b) Case $\vec{uv} \in E_Y \setminus E_{YE}$ (c) Case
$\vec{uv} \not\in E_Y$.} \label{fig:algcases}
\end{figure}

Assume now that $\overrightarrow{uv} \not \in E_{YE}$. Let $F$ be
the edge set corresponding to cone $K_v(u)$ and let $i$ be such that
$\overrightarrow{uv} \in F_i$. Since $uv \not \in E_{YE}$ there is
an edge $\overrightarrow{u_1v} \in F_i$ of smaller \id\ that gets
added to $E_{YE}$. Note however that $u_1$ and $u$ belong to one
same cone $K_v(u)$ (see Figure~\ref{fig:algcases}b). By the same
arguments as above, there is $\overrightarrow{w_1v} \in K_v(u)$
corresponding to the edge $\overrightarrow{u_1v} \in E_{YE}$ which
enables us to identify the $t$-spanner path $p(u \rightsquigarrow
w_1) \oplus w_1v$ from $u$ to $v$ in $YES_k$.

\paragraph{Case 2:} $\overrightarrow{uv} \not\in E_{Y}$.
Let $uv_1 \in E_Y$ be the edge selected by $u$ in the \ys. Thus we
have that $\id(uv_1) < \id(uv)$ and therefore $|uv_1| \le |uv|$. If
$\overrightarrow{uv_1} \in E_{YES}$, then arguments similar to the
ones used for Case 1 show that $uv_1 \oplus p(v_1 \rightsquigarrow
v)$ is a $t$-spanner path from $u$ to $v$ in $YES_k$.

Consider now the case when $\overrightarrow{uv_1} \not\in E_{YES}$.
Let $F$ be the edge set corresponding to cone $K_{v_1}(u)$ and let
$i$ be such that $\overrightarrow{uv_1} \in F_i$. Let
$\overrightarrow{u_1v_1} \in F_i$ be the edge that gets added to
$E_{YE}$ in Step 2 of the \yssa\ algorithm. Since both $uv_1$ and
$u_1v_1$ belong to a same set $F_i$, and since
$\id(\overrightarrow{u_1v_1}) \le \id(\overrightarrow{uv_1})$
(equality happens when $u_1 = u$), we have that
\begin{equation}
|u_1v_1| \ge |uv_1|/r. 
\label{eq:alg1}
\end{equation}
Lemma~\ref{lem:ys.seq2} indicates that, corresponding to the edge
$\overrightarrow{u_1v_1} \in E_{YE}$, there exists a path $P_0 \in
YES_k \cap K_{v_1}(u_1)$ extending from $w_0 = v_1$ to some vertex
$w_{\ell}$, such that 
\begin{eqnarray}
|P_0| & \le & |w_\ell v_1|/\cos 2\theta. \label{eq:alg2} \\
|w_\ell v_1| & \ge & |u_1v_1|/(2\cos\theta) \label{eq:alg20}
\end{eqnarray}

Now note that, since $|v_1w_{\ell}| \le |v_1 u_1|$ and since
$\widehat{w_{\ell}v_1u} \le \theta \le \pi/4$, we have that
$|uw_\ell| < |uv_1| \le |uv|$. Similarly, $|v_1v| \le |uv|$. Thus we
can use the inductive hypothesis to claim the existence of
$t$-spanner paths  $p(u \rightsquigarrow w_{\ell})$ and $p(v_1
\rightsquigarrow v)$. Let $P_1 = p(u \rightsquigarrow w_{\ell})
\oplus p(v_1 \rightsquigarrow v)$. We show that $P = P_0 \oplus P_1$
is a $t$-spanner path from $u$ to $v$. Calculations identical to the
ones used to derive the inequality~(\ref{eq:p3}) yield
\begin{equation}
\nonumber |P_1| \le t |uv| + t|uv_1|(1-\cos\theta +\sin\theta) -
t|w_{\ell}v_1|(\cos\theta  - \sin\theta). \label{eq:alg3}
\end{equation}
This along with~(\ref{eq:alg2}) shows that the length of $P = P_0
\oplus P_1$ is
\begin{equation}
\nonumber |P| \le t |uv| + t|uv_1|(1-\cos\theta +\sin\theta) -
|w_{\ell}v_1|(t\cos\theta  - t\sin\theta -1/\cos 2\theta).
\label{eq:alg4}
\end{equation}
Substituting~(\ref{eq:alg1}) and~(\ref{eq:alg2}) yields
\begin{equation}
|P| \le t |uv| + (t|uv_1|(1-\cos\theta +\sin\theta) -
\frac{|uv_1|}{2r\cos\theta}(t\cos\theta  - t\sin\theta -1/\cos
2\theta)). \label{eq:alg5}
\end{equation}
Note that the second term in the right hand side of the
inequality~(\ref{eq:alg5}) is non-positive for any $t \ge
\frac{\lambda/cos 2\theta}{(\lambda+1)(\cos\theta-\sin\theta)-1}$
and for any $\theta$ satisfying the condition $\cos\theta -
\sin\theta > \frac{1}{\lambda+1}$. 
\end{proof}

\medskip
\noindent Arguments similar to the ones used for
Corollary~\ref{cor:1e} show that, for appropriate values of $r$ and
$k$ corresponding to a fixed $\e > 0$, $YYS_k$ is a
$(1+\e)$-spanner.

\paragraph{Efficient Local Implementation.}
For a local implementation of the \ys, the authors propose
in~\cite{li02sparse} to have each sink node $u$ build $T(u)$ and
then broadcast $T(u)$ to all nodes in $T(u)$. It can be easily
verified that, for each node $u$ and each cone $K_u$, the neighbors
of $u$ that lie in $K_u$ (including $u$) form a clique. This
suggests a more efficient alternate local implementation of the \ys:
each node collects the coordinate information from its immediate
neighbors, then simulates the execution of the \ys\ locally, on the
collected neighborhood. This implementation avoids broadcasting
messages of size $O(n)$ (encoding the sink trees) by each node, thus
saving some battery power. This idea can be extended to the \yssa\
algorithm as well: each node collects its neighborhood information
in one round of communication, then simulates the execution the
\yssa\ algorithm on the collected neighborhood.

\section{Total Weight of $Y_k$, $YY_k$, $YS_k$ and $YES_k$}
\label{sec:total.weight}

Define the total \emph{weight} $wt(G)$ of a graph $G$ as the sum of
the lengths of its constituent edges. We first show that the total
weight of the Yao graph $Y_k$ constructed by \ys\ is arbitrarily
high compared to the weight of the Minimum Spanning Tree (MST) for
$V$. Although this result is fairly straightforward, to the best of
our knowledge it has not appeared in the literature.

\begin{theorem}
Let $G$ be a UDG and let $Y_k =$ \ys($G, k$). Then $wt(Y_k) =
\Omega(n)\cdot wt(MST)$.
\end{theorem}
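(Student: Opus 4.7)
The plan is to exhibit a family of UDGs $G_n$ on $n$ vertices for which $wt(Y_k)/wt(MST)$ grows linearly in $n$. The idea is to force the \ys\ to add $\Omega(n)$ distinct edges of length $1$ while keeping the MST bounded by a constant; I will assume $k \ge 5$ (the case of interest throughout this paper is $k \ge 8$).

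For the construction, fix a small $\delta \in (0, 1)$ and an even $n$. Place $n/2$ points forming ``column $A$'' as $u_i = (0, i\delta/n)$ for $i = 1, \ldots, n/2$, and $n/2$ points forming ``column $B$'' as $v_j = (1, j\delta/n)$ for $j = 1, \ldots, n/2$. A direct distance computation shows that every intra-column pair lies within distance less than $\delta < 1$ (so each column induces a clique in the UDG), while the cross-column pair $u_i v_j$ has distance $\sqrt{1 + ((i-j)\delta/n)^2}$, which strictly exceeds $1$ whenever $i \neq j$. Hence the only UDG edges between columns are the ``matched'' horizontal edges $\{u_i v_i\}$, each of length exactly $1$, and $G_n$ is connected.

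For the MST bound, the spanning tree consisting of the path $u_1, \ldots, u_{n/2}$, the single cross-edge $u_{n/2} v_{n/2}$, and the path $v_{n/2}, \ldots, v_1$ has total weight less than $\delta + 1 + \delta < 3 = O(1)$. For the Yao bound, fix any $u_i$ and consider the cone $K_{u_i}$ containing the direction $(1, 0)$. Since $k \ge 5$, each cone has angle strictly less than $\pi/2$, so $K_{u_i}$ excludes the directions $(0, \pm 1)$; therefore no column-$A$ point other than $u_i$ lies in $K_{u_i}$. Meanwhile, by the distance calculation above the only column-$B$ UDG neighbor of $u_i$ is $v_i$. Thus $v_i$ is the unique candidate in $K_{u_i}$, and the \ys\ is forced to add $\overrightarrow{u_i v_i}$ to $Y_k$. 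Over all $i$ this produces $n/2$ distinct undirected edges of length $1$, yielding $wt(Y_k) \ge n/2$.

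Combining, $wt(Y_k)/wt(MST) \ge (n/2)/3 = \Omega(n)$, as claimed. The main subtlety worth flagging is that many natural short-MST UDGs collapse $\Omega(n)$ potential long Yao edges onto a single preferred target, producing only $O(1)$ distinct long Yao edges in total: for instance, two dense point-like clusters at distance $1$ would cause every vertex of one cluster to pick the same minimum-\id\ vertex of the other. The construction above sidesteps this by ``thinning'' each cluster into a column, so that the UDG itself only contains the matched cross-edges $\{u_i v_i\}$; consequently each $u_i$ has no freedom in the cone pointing toward column $B$ and must select $v_i$.
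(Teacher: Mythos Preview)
Your proof is correct and follows essentially the same approach as the paper: both constructions place $n/2$ points on each of two parallel line segments at distance~$1$, arranged so that the only cross-segment UDG edges are the $n/2$ matched perpendicular unit-length edges, each of which is the unique candidate in its cone and is therefore forced into $Y_k$; meanwhile the MST has constant weight. The only cosmetic differences are that the paper uses the top and bottom sides of a unit square (and states the bound for $k\ge 6$), whereas you use two short vertical columns of height~$\delta$ at horizontal distance~$1$ (and your analysis goes through for $k\ge 5$).
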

\begin{proof}
Consider a set of $n=2s$ nodes equally distributed along the top and
bottom sides of a unit square, as in Figure~\ref{fig:weight}. Let
$u_1, u_2, \ldots u_s$ denote the top nodes and $v_1, v_2, \ldots
v_s$ the bottom nodes.
%
\begin{figure}[htbp]
\centering
\begin{tabular}{c@{\hspace{0.15\linewidth}}c}
\includegraphics[width=0.25\linewidth]{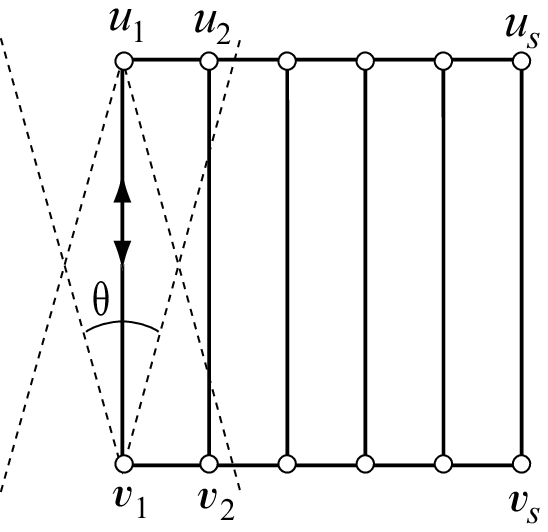} &
\includegraphics[width=0.25\linewidth]{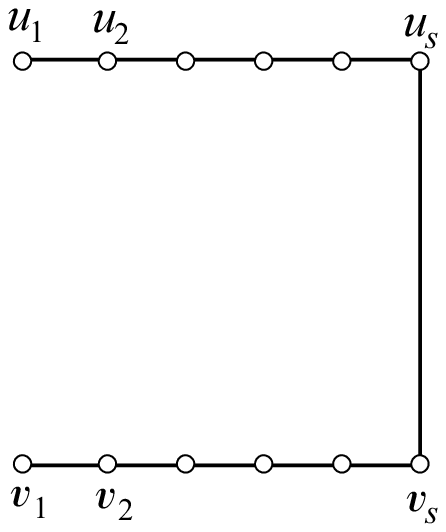} \\
(a) & (b)
\end{tabular}
\caption{$Y_k$ has unbounded weight: (a) $wt(Y_k) = n/2+2$; (b)
$wt(MST) = 3$.} \label{fig:weight}
\end{figure}
%
Observe that for each node $u_i$, the angular distance between any
of its left/right neighbors and $v_i$ is $\pi/2$. This means that
the only edge incident to $u_i$ that lies in the cone of angle
$2\pi/k \le \pi/3$ (for $k \ge 6$) centered at $u_i$ and containing
$v_i$ is $u_iv_i$ (see Figure~\ref{fig:weight}a). Consequently,
$u_i$ adds $\overrightarrow{u_iv_i}$ to $E_Y$ in the \ys. Similarly,
$v_i$ adds $\overrightarrow{v_iu_i}$ to $E_Y$. Thus the total weight
of $Y_k$ is no less than
$$
  \sum_{i=1}^s |u_iv_i| = n/2.
$$
However, the weight of the spanning tree illustrated in
Figure~\ref{fig:weight}b is 3. This completes the proof.
\end{proof}

\medskip
\noindent We have shown that, for any node $u$ in the topology from
Figure~\ref{fig:weight}a, at most one edge from $Y_k$ lies in any
cone $K_u$ centered at $u$. This implies that:

\begin{itemize}
\item [(a)] No edges from $Y_k$ get discarded
in the \rys.
\item [(a)] No edges from $Y_k$ get discarded
in the filtering step (Step 2) of \yssa.
\item [(b)] No edges from $Y_k$ get altered in the \yss.
\end{itemize}
This implies that $YY_k$, $YS_k$ and $YES_k$ are all identical to
$Y_k$ and therefore have unbounded weight as well. 
%

It is worth noting that any civilized UDG $G=(V, E)$ has weight
within a constant factor of $wt(MST(V))$ and therefore the
structures $Y_k$, $YY_k$, $YS_k$ and $YES_k$ for civilized UDGs have
bounded weight as well. This follows immediately from a result
obtained by Das et al.~\cite{Das95}:

\medskip
\noindent {\bf Fact 2 (Theorem 1.2 in~\cite{Das95}).} If a set of
line segments $E$ satisfies the isolation property, then $wt(E) =
O(1)\cdot wt(MST)$.

\medskip
\noindent A set of line segments $E$ is said to satisfy the
\emph{isolation property} if each segment $uv \in E$ can be
associated with a cylinder $B$ of height and width equal to $c|uv|$,
for some constant $c > 0$, such that the axis of $B$ is a subsegment
of $uv$ and $B$ does not intersect any line segment other than $uv$.
In the case of $\lambda$-civilized UDGs, this property is satisfied
by $c = \lambda$.

\section{Conclusions}
We have shown that the Yao-Yao graph is a spanner for UDGs of
bounded aspect ratio. We have also proposed an extension of the
Yao-Sink method, called Yao-Sparse-Sink, that enables an efficient
local computation of sparse sink trees. The Yao-Sparse-Sink method
is preferable to the Yao-Sink method for topology control in highly
dynamic wireless environments. Our analysis of the Yao-Sparse-Sink
method provides additional insight into the properties of the
Yao-Yao structure. However, the main question of whether the Yao-Yao
graph for arbitrary UDGs is a length spanner or not remains open.


\medskip
\noindent {\bf Acknowledgement.} We thank Michiel Smid for helpful
discussions on these problems.

\small
\def\cprime{$'$}

\end{document}